\newtheorem{definition}{Definition}
\newtheorem{theorem}{Theorem}
\newtheorem{statement}{Statement}
\newtheorem{remark}{Remark}
\newtheorem{lemma}{Lemma}
\title{Cyclic Group Projection for Enumerating Quasi-Cyclic Codes Trapping Sets}
\author{%
  \IEEEauthorblockN{Vasiliy~Usatyuk}
  \IEEEauthorblockA{R\&D Department, T8\\ 
                    Email: L@lcrypto.com}
  \and
  \IEEEauthorblockN{Yury Kuznetsov}
  \IEEEauthorblockA{R\&D Department, T8\\ 
                     Email: K@lcrypto.com}
   \and
  \IEEEauthorblockN{Sergey Egorov}
  \IEEEauthorblockA{Computer Science Department, South-West State University\\ 
                     Email: sie58@mail.ru}

                     }
\begin{document}
\maketitle
\begin{abstract}
This paper introduces a novel approach to enumerate and assess Trapping sets in quasi-cyclic codes,  those with circulant sizes that are non-prime numbers. Leveraging the quasi-cyclic properties, the method employs a tabular technique to streamline the importance sampling step for estimating the pseudo-codeword weight of Trapping sets.  The presented methodology draws on the mathematical framework established in the provided theorem, which elucidates the behavior of projection and lifting transformations on pseudo-codewords.
   \end{abstract}\vspace{-1mm}

%

\newcommand{\F}{\mathbb{F}_2}
\newcommand{\Z}{\mathbb{Z}}
\newcommand{\R}{\mathbb{R}}
\newcommand{\supp}{\mathrm{supp}\,}
\renewcommand{\P}{\mathbb{P}}
\renewcommand{\mod}{\, \mathrm{mod} \,}
\newcommand{\Wcal}{\mathcal{W}}

\section{Introduction}

The incorporation of graph models as a theoretical framework has sparked a revolutionary impact across a wide array of domains, notably in areas such as forward error correction, source coding, compressed sensing, classical and quantum machine learning/computation, quasi-optimal electrical power and transport networks,  post-quantum cryptography.

In the publication \cite{Equilibrium}, a noteworthy connection has been revealed between sophisticated Deep Neural Network (DNN) architectures, exemplified by state-of-the-art models such as Mega and ChordMixer Transformers in Natural Language Processing, and the Generalized Irregular Repeat Accumulate and Cage-graph graph models. Delving into the intricacies of the approximate landscape of Bethe-Hessian (Bethe-permanent) and Quasi-Newton landscape Hessian (permanent) in deep neural networks goes beyond the conventional lazy regime, sparking inquiries into the symmetry and asymmetry of statistical manifolds within arbitrary non-linear channels \cite{lazy}. This exploration reignites discussions on the capacity of graph models under non-linear data channels in high noise environments, with a specific emphasis on the generalized covariance evolution describing linear size Trapping sets (TS) and low noise conditions for sublinear linear size TS. A notable discovery relates to certain nonlinear data channels, where the clarification of ground states (barycenter, \cite{Chavira07,Vuffray}) involves embedding tori and circular hyperboloids, revealing their equivalence to quasi-cyclic codes. The count of energy minima corresponds to the size of the circulant, underscoring the crucial role of quasi-cyclic codes in linearization, \cite{Equilibrium}. As a result, advancing neural network architectures and establishing highly reliable, length-adaptive error correction codes based on quasi-cyclic graphs necessitates the development of efficient algorithms aimed at enhancing the spectrum of TS.
 
\section{Basic definition}

To facilitate clarity in our exposition, we denote the binary field as $GF(2) \equiv \mathbb{F}_2$. Formulas employ a zero-based numbering convention. The cardinality of a finite set $M$ is denoted as $|M|$. In expressions where addition is conducted within the field $\F$, the symbol $\oplus$ is employed. The set of integers is denoted as $\Z$. Our foundational notation is introduced following the conventions outlined in~\cite{PolUsaVor}.
\begin{definition}
The circulant matrix \(Q\), denoted as \(\{Q_{ij}\}_{i,j=0}^{z-1}\) with \(z > 0\), is defined as follows:

\[
Q_{ij} = 
\begin{cases}
    1, & \text{if } i + 1 \equiv j \pmod{z}, \\
    0, & \text{otherwise},
\end{cases}
\]
for \(i, j = 0, \ldots, z-1\).

Additionally, it is assumed that the inverse of \(Q\), denoted as \(Q^{-1}\), is equivalent to the zero matrix \(\mathbf{0}\) in \(\F^{z \times z}\). We assume $Q^{-1} \equiv 0\in \F^{z\times z}$.
\end{definition}

\begin{definition}
The quasi-cyclic parity-check matrix, denoted as \(H\) and of size \(mz \times nz\) with \(m, n > 0\) and circulant \(z\), is expressed as:
\[ H = \left[ 
\begin{array}{llll}
Q^{a_{00}} & Q^{a_{01}} & \ldots & Q^{a_{0,n-1}} \\
Q^{a_{10}} & Q^{a_{11}} & \ldots & Q^{a_{1,n-1}} \\
\vdots & \vdots & \vdots & \vdots \\
Q^{a_{m-1,0}} & Q^{a_{m-1,1}} & \ldots & Q^{a_{m-1,n-1}} \\
\end{array}
\right]. \]
\end{definition}

\begin{definition}
The exponential matrix \(E(H)\), belonging to \(\Z^{m\times n}\) and defined for \(H\) as follows:
\[ E(H) = \left[ 
\begin{array}{llll}
a_{00} & a_{01} & \ldots & a_{0,n-1} \\
a_{10} & a_{11} & \ldots & a_{1,n-1} \\
\vdots & \vdots & \vdots & \vdots \\
a_{m-1,0} & a_{m-1,1} & \ldots & a_{m-1,n-1} \\
\end{array}
\right]. \]
\end{definition}

\begin{definition}
The mother matrix or base (protograph) matrix \(M(H)\) is obtained from \(E(H)\) by replacing -1 with 0, and all other values with 1.

\end{definition}

\begin{definition}
The linear code defined by a quasi-cyclic parity-check matrix \(H\), denoted as \(C = C(H)\), is a subset of \(\F^{nz}\) and is defined as:

\[ C(H) = \{x \in \F^{nz} : \quad Hx^T = 0\}. \]
\end{definition}

\section{Cyclic Group Decomposition}
Let's assume that the circulant size \(z\) is equal to the product of two integers \(l\) and \(z_\ast\), where both \(l\) and \(z_\ast\) are greater than 1.  Notably, this property is found in various practical instances, including length-adapted quasi-cyclic codes such as the enhanced mobile broadband (eMBB) 5G QC-LDPC codes, DVB-S2/X codes, Graph Neural Network and others.

\begin{definition}\label{defPP} ({\rm{\cite{PolUsaVor}},~3.1), \cite{MyungYang1,MyungYang2}}.
The mapping \(\P_{z\rightarrow z_\ast}\) from a set of quasi-cyclic matrices \(\F^{mz\times nz}\) with circulant size \(z\) to a set of quasi-cyclic matrices \(\F^{mz_\ast\times nz_\ast}\) with circulant size \(z_\ast\) is determined by the mapping of their exponential matrices, denoted as \(E_z\rightarrow E_{z_\ast}\). Here, \(E_z\) and \(E_{z_\ast}\) belong to \(\Z^{m\times n}\) and follow a modular lifting scheme:

\[
\{E_{z_\ast}\}_{ij} = 
\begin{cases}
    \{E_z\}_{ij} \mod z_\ast, & \text{if } \{E_z\}_{ij} \geqslant 0, \\
    -1, & \text{if } \{E_z\}_{ij} = -1.
\end{cases}
\]
\end{definition}
It's worth noting that the mapping \(\P_{z\rightarrow z_\ast}\) preserves the structure of the mother matrix. Specifically, the Tanner graph of a quasi-cyclic matrix \(\tilde{H}\) with the exponential matrix \(E_z = E(\tilde{H})\), along with the natural projection, constitutes an \(l\)-graph covering of the parity-check matrix \(H\) with \(E_{z_\ast} = E(H)\), \cite{Cover}.

\begin{definition}\label{defP}
Linear mapping $P_{z\rightarrow z_\ast}$ set of vectors $\F^{nz}$ to a set of vectors $\F^{nz_\ast}$
we will determine by the following procedure.
For arbitrary $x\in \F^{nz}$ let's denote $y = P_{z\rightarrow z_\ast} x \in \F^{nz_\ast}$.
Component $y_j$, ${0 \leqslant j < nz_\ast - 1}$, where $j = qz_\ast + r$, $0\leqslant r < z_\ast$, is calculated as follows:
\begin{equation}\label{eqA}
y_j = x_{qz+r}\oplus x_{qz + r + z_\ast} \oplus \ldots \oplus x_{qz + r + (l-1)z_\ast}.
\end{equation}
\end{definition}

If we consider \(x\) as the codeword of a quasi-cyclic parity-check matrix \(\tilde{H}\), then the vector \(P_{z\rightarrow z_\ast} x\) becomes the image of a pseudocode word (as defined in~\cite{Cover}) corresponding to \(x\) under the homomorphism \(\Z \rightarrow \Z_2\). This \(l\)-covering is determined by the mapping \(\P_{z\rightarrow z_\ast}\). Graph covers, accompanied by their corresponding lifting operations, can be seen as enhancing the delay operator in convolutional codes. The tail-bited  form of convolutional codes evolves into Quasi-Cyclic codes with an expanding circulant size, as elaborated in \cite{Tanner, Cover, Cover1, Cover2}. As observed, both codewords \cite{MacKayFirstBoundAndMixedAuth, RoxanaCodeBound, BPMETCodeBound} and pseudo-codewords \cite{RoxanaPseudoBound, VontobelPseudoBound,DeBanih} consistently approach their respective upper bounds. 


\begin{definition} Projection index operator: 
The function \(\pi_{z\rightarrow z_\ast}\) is defined as:

\[
\pi_{z\rightarrow z_\ast}: \{0, 1, 2, \ldots\} \longrightarrow \{0, 1, 2, \ldots \},
\]

and it operates as follows: for \(j \geqslant 0\), \(j = qz + r\), \(0 \leqslant r < z\),

\[
\pi_{z\rightarrow z_\ast}(j) = qz_\ast + (r \mod z_\ast).
\]

Using the definition of \(\pi_{z\rightarrow z_\ast}\), \(P_{z\rightarrow z_\ast}\) can be expressed as:

\[
\forall x \in \F^{nz} \quad \{P_{z\rightarrow z_\ast} x\}_j = \sum_{k\in \pi^{-1}(j)} x_k, \,\, j = 0, 1, \ldots, nz_\ast - 1,
\]

where the summation is performed in \(\F\).

\end{definition}

\begin{definition}~{\rm{\cite{PriceHall}[Definition~27]}}
A \((w, v)\)-TS pseudo-codeword is a vector \(x\in \F^{nz}\) satisfying the condition that the Hamming weights of the vectors \(x\) and \(Hx^T\) are equal to \(w\) and \(v\) respectively, where \(w > 0\) and \(v \geqslant 0\).
\end{definition}

We fix an arbitrary quasi-cyclic check matrix \(H\in \F^{mz\times nz}\) with a circulant size of \(z\), where \(z = lz_\ast\), \(l, z_\ast\in \Z\), and \(l, z_\ast > 1\).

\begin{theorem}\label{th1}
For any pseudocode word $(w, v)$-pseudocode word $x\in \F^{nz}$ relative to the check matrix $H$ 
vector $x_\ast = P_{z\rightarrow z_\ast}x\in \F^{nz_\ast}$ 
will $(w', v')$-TS pseudo-code word relative to the parity check matrix $H_\ast = \P_{z\rightarrow z_\ast}H$, 
and $w' \leqslant w, \; v'\leqslant v,$
$$
 \quad v-v'=2r',\, w - w'=2r'',\,\, r', r'' \in \{0, 1, 2, \ldots\}.
$$
As a consequence, the linear transformation
$$
P_{z\rightarrow z_\ast}: \F^{nz}\longrightarrow \F^{nz_\ast}
$$
maps the linear code $C(H)$ to some subset of code $C(H_\ast)$.
\end{theorem}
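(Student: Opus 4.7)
The plan is to reduce the statement to two independent facts about $P_{z\to z_\ast}$: a commutativity identity between the projection and the parity-check map, and a parity/monotonicity property of the Hamming weight under projection. Once both are in hand, the bounds on $w'$ and $v'$ follow by applying the weight property to $x$ and to the syndrome $Hx^T$ respectively, and the code-inclusion corollary is immediate by setting $v=0$.

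\textbf{Step 1 (commutativity).} First I would prove that for every $x\in\F^{nz}$,
$$H_\ast x_\ast^T \;=\; P_{z\to z_\ast}\!\bigl(Hx^T\bigr),$$
where on the right the projection is applied block-row-wise to the $m$-block vector $Hx^T\in\F^{mz}$. By $\F$-linearity it suffices to verify the single-circulant identity $P_{z\to z_\ast}(Q^a u)=Q_\ast^{a\bmod z_\ast}\,P_{z\to z_\ast}(u)$ for any $u\in\F^z$, where $Q_\ast$ is the circulant shift of size $z_\ast$. Writing $a=tz_\ast+s$ with $0\leqslant s<z_\ast$ and $0\leqslant t<l$, a direct index computation shows that at position $j$ both sides reduce to the XOR of the $l$ entries $u_{(j+a+kz_\ast)\bmod z}$ for $k=0,\ldots,l-1$; the case $j+s\geqslant z_\ast$ produces exactly one wrap-around that is absorbed by the $\bmod z$ reduction on the left and by the $\bmod z_\ast$ reduction inside a fibre of $\pi_{z\to z_\ast}$ on the right.

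\textbf{Step 2 (weight lemma and conclusion).} Next I would establish that for any $u\in\F^{Nz}$ the projection $u_\ast=P_{z\to z_\ast}u$ satisfies $\mathrm{wt}(u_\ast)\leqslant\mathrm{wt}(u)$ and $\mathrm{wt}(u)-\mathrm{wt}(u_\ast)\in 2\{0,1,2,\ldots\}$. Indeed, using $\{u_\ast\}_j=\sum_{k\in\pi^{-1}(j)}u_k$ in $\F$, if $s_j:=|\{k\in\pi^{-1}(j):u_k=1\}|$ then $\{u_\ast\}_j=s_j\bmod 2$, so
$$\mathrm{wt}(u)-\mathrm{wt}(u_\ast)\;=\;\sum_j\bigl(s_j-(s_j\bmod 2)\bigr)\;=\;\sum_j 2\lfloor s_j/2\rfloor,$$
which is a nonnegative even integer. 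Applying the lemma to $u=x$ gives $w-w'=2r''\geqslant 0$, and applying it to $u=Hx^T$ together with Step~1 gives $v-v'=2r'\geqslant 0$. For the final assertion, $x\in C(H)$ means $v=0$; hence $v'=0$, so $H_\ast x_\ast^T=0$ and $x_\ast\in C(H_\ast)$.

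\textbf{Main obstacle.} The main difficulty is Step~1: the index bookkeeping for $P_{z\to z_\ast}Q^a=Q_\ast^{a\bmod z_\ast}P_{z\to z_\ast}$ requires carefully splitting $a=tz_\ast+s$, distinguishing the two cases $j+s<z_\ast$ and $j+s\geqslant z_\ast$, and checking that the $l$-element fibres of $\pi_{z\to z_\ast}$ are permuted by $Q^a$ in a manner compatible with a single step of $Q_\ast$ followed by the same fibration. Once this identity is secured, the weight lemma is an elementary counting argument and the rest of the theorem follows by two applications of it.
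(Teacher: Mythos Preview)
Your argument is correct. Both your proof and the paper's hinge on the same index identity---in your language the circulant intertwining $P_{z\to z_\ast}Q^{a}=Q_\ast^{\,a\bmod z_\ast}P_{z\to z_\ast}$, in the paper's language the pointwise relation $c_\ast(\pi(j))=\pi(c(j))$ (their equation~(\ref{eq1}))---and both derive the evenness of $w-w'$ directly from the definition of $P$. Where you diverge is in how this identity is deployed. You promote it at once to the operator equation $H_\ast P=PH$ and then apply your weight lemma uniformly to $x$ and to $Hx^T$; the paper instead stays at the level of individual supports and active checks, proving two auxiliary lemmas (a parity-counting lemma over the fibres $W_j=\pi^{-1}(j)$, and a lemma expressing each unsatisfied check of $H_\ast$ as an XOR of disjoint groups of checks of $H$) before concluding. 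Your route is shorter and makes the structure transparent: once $H_\ast P=PH$ is established, the theorem is literally two invocations of the same elementary fact about Hamming weight under $\F$-linear fibrewise summation. The paper's route, while heavier, keeps explicit track of which checks of $H$ collapse onto which checks of $H_\ast$, information that is useful downstream (e.g.\ in interpreting Table~\ref{tabl1} and the algorithm of Section~\ref{algorithm}) but not needed for the theorem itself. The ``main obstacle'' you flag is exactly the computation the paper carries out to prove its equation~(\ref{eq1}), and your case split on $j+s\gtrless z_\ast$ is the right way to handle it.
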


\begin{remark}
Case $w' = 0$ possible if $w$~-- even number. In this case, the statement of the theorem makes no sense. 
\end{remark}
\begin{remark}
The corollary indicated at the end of the theorem is directly deduced from the theorem ~4.4 from \cite{Cover}.
\end{remark}
\begin{proof}
To simplify the proof, we will use the symbols $\P$, $P$, $\pi$ without index $z\rightarrow z_\ast$.

The first part of the theorem is that the number of non-zero components of the vector
$x_\ast = Px$ will be the same as $x$, or less by an even number. This follows from the definition of the operation $P$~(\ref{eqA}).
To prove the second part, we introduce a number of notations and simplifications.

From the definition of the Hamming weight it follows that it is sufficient to prove the statement of the theorem for $m = 1$, that is for parity-check matrix
${H\in \F^{z\times nz}}$ from one block (circulant) row. 

For an arbitrary vector $y\in \F^q$ through $\supp y$ we will denote the set of indices $j\in \{0, 1, \ldots, q-1\}$ such that
$y_j = 1$.

$\sigma: \Z \longrightarrow \Z_2$~--- homomorphism of additive groups.

Let's assume that the index $i$ defines some scalar linear equation. Will write $\rho(i)=1$ if this equation does not hold for vector $x$ (clear from the context) and $\rho(i)=0$ otherwise.

Let us denote by $V^0$ and $V^0_\ast$ subsets of vertex indices $\{i\}_{i=0}^{ nz-1}$ and $\{i\}_{i=0}^{ nz_\ast-1}$
respectively, corresponding to non-zero circulants of the matrices $\F^{z\times nz}$ and $\F^{z_\ast\times nz_\ast}$.

Since any active parity-check~--- i.e. scalar equation of the system $Hu = 0$ ($H_\ast v = 0$), containing non-zero
vector component $x$ ($x_\ast$)~---  depends only on $\supp x \cap V^0$ ($\supp x_\ast \cap V^0_\ast$) and, in addition, for $\tilde{x}$, $\supp \tilde{x} = \supp x\cap V^0$ equality is satisfied
$$
\supp (P\tilde{x}) = \supp (Px) \cap V^0_\ast,
$$
we can assume that $\supp x\subset V^0$, $\supp x_\ast\subset V^0_\ast$. There are relations
$$
\pi(V^0) = V^0_\ast, \quad \supp x_\ast \subset \pi(\supp x)
$$

Quasi-cyclic matrix structure $H$ and $H_\ast$ uniquely defines mappings
$$
\begin{aligned}
c(\cdot): \quad & V^0 \longrightarrow \{0, 1, \ldots, z - 1\}\\
c_\ast(\cdot): \quad & V^0_\ast \longrightarrow \{0, 1, \ldots, z_\ast - 1\}
\end{aligned}
$$
corresponding to the vertex index $j$ row index $i$ such that the element of the corresponding check matrix
in position $(i, j)$ is equal to 1. Let us prove the equality
\begin{equation}\label{eq1}
c_\ast(\pi(j)) = \pi(c(j)), \quad j\in V^0.
\end{equation}
Let's fix an arbitrary index $j\in V^0$, $j = qz  + r$, $0 \leqslant q < n$, $0 \leqslant r < z$. 
Then $c(j)= (r - a_q)\mod z$, where $a_q$~--- exponent (upper index) $q$-th circulant of the matrix $H$, 
and the operation $a \mod z$ is defined so that the result lies in the interval $[0, z-1]$ for any integer $a$. The chain of equalities is fair
\begin{multline*}
\pi(c(j)) = ((r - a_q)\mod z) \mod z_\ast = \\
= (r - a_q)\mod z_\ast =  (r\mod z_\ast - a_q \mod z_\ast)\mod z_\ast =
  \\ = c_\ast(qz_\ast + r\mod z_\ast) = c_\ast(\pi(j)).
\end{multline*}

Mapping
$$
\pi: \{0, 1, \ldots, nz-1\}\longrightarrow \{0, 1, \ldots, nz_\ast - 1\}
$$
defines the decomposition of a set $V^0$ into disjoint subsets
$W_j := \pi^{-1}(j)$, $ j\in V^0_\ast$.

By definition of operation $P$ set of indexes $\supp x_\ast$ correspond one-to-one with these
$W_j$, for which $W_j \cap \supp x$ consists of an odd number of elements. We will denote this fact
$W_j \sim \supp x_\ast$.

\begin{lemma}\label{lem1}
Let's $\Wcal$~--- arbitrary family of sets $W_j$, $j\in V_\ast^0$. 
$$
 \hat{\Wcal} := \{ W\in \Wcal: \; W \sim \supp x_\ast \}.
$$
Then the following parities coincide:
\begin{equation}\label{lem1eq}
\sigma(|\hat{\Wcal}|) = \sigma\left(\left| \bigcup_{W\in \Wcal}\{W \cap \supp x\}\right|\right)
\end{equation}
\end{lemma}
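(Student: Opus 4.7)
The plan is to exploit the fact that the sets $W_j = \pi^{-1}(j)$ partition $V^0$, so the members of $\mathcal{W}$ are pairwise disjoint. Under this observation the claim becomes an elementary mod-$2$ counting identity: the parity of a sum of nonnegative integers equals the parity of the number of odd summands among them.

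First, I would unwind the definition of the relation $W \sim \supp x_\ast$ given just above the lemma: by construction of $P$ via~(\ref{eqA}), a block index $j \in V^0_\ast$ lies in $\supp x_\ast$ exactly when $|W_j \cap \supp x|$ is odd. Therefore, for every $W\in\mathcal{W}$,
\begin{equation*}
W \in \hat{\mathcal{W}} \;\Longleftrightarrow\; \sigma(|W\cap \supp x|)=1.
\end{equation*}
Consequently $|\hat{\mathcal{W}}|$ counts exactly the members of $\mathcal{W}$ for which $|W\cap\supp x|$ is odd.

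Next I would use disjointness. Since $\pi$ is a function, the preimages $W_j=\pi^{-1}(j)$ are pairwise disjoint, hence so are the intersections $\{W\cap\supp x: W\in\mathcal{W}\}$. This gives
\begin{equation*}
\left|\bigcup_{W\in\mathcal{W}} (W\cap\supp x)\right| \;=\; \sum_{W\in\mathcal{W}} |W\cap \supp x|.
\end{equation*}
Reducing modulo $2$ and applying additivity of $\sigma$, the right side collapses to $\sum_{W\in\mathcal{W}}\sigma(|W\cap \supp x|)$, which by the observation above equals $\sigma(|\hat{\mathcal{W}}|)$. This chain of equalities is exactly~(\ref{lem1eq}), so the lemma follows.

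There is no real obstacle here: the only thing one needs to be careful about is interpreting the cluttered notation $\bigcup_{W\in\mathcal{W}}\{W\cap\supp x\}$ as an ordinary (disjoint) union of subsets of $V^0$, rather than as a family of one-element sets. Once that is fixed, the lemma is pure parity bookkeeping, and its role in the larger proof of Theorem~\ref{th1} will be to translate between counts of unsatisfied checks downstairs (for $x$ with respect to $H$) and upstairs (for $x_\ast$ with respect to $H_\ast$) by grouping the row-indices of $H$ according to their image under $\pi$.
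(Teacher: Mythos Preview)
Your proof is correct and follows essentially the same approach as the paper: both use the pairwise disjointness of the $W_j$ to turn the cardinality of the union into a sum $\sum_{W\in\Wcal}|W\cap\supp x|$ and then reduce modulo $2$. The only cosmetic difference is that the paper first treats the special case $\hat\Wcal=\Wcal$ (writing each odd summand as $1+2s_W$) and then observes that the remaining sets contribute even counts, whereas you handle both cases at once via the additivity of $\sigma$.
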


\begin{proof}
Let us first assume the equality $\hat{\Wcal} =\Wcal$. Considering that the sets in the union on the right side ~(\ref{lem1eq}) do not intersect, we get:
\begin{multline*}
\left| \bigcup_{W\in \Wcal}\{W\cap \supp x\}\right | = \sum_{W\in \Wcal} | \{W\cap \supp x\} | = \\
= \sum_{W\in \Wcal} (1 + 2s_W) = |\Wcal|  + 2s,
\end{multline*}
where $s_W$, $s$~--- some integers. This proves the lemma in a special case. If  $\hat{\Wcal}\neq \Wcal$,
then the number of indices $\supp x$ in every set $\Wcal \setminus \hat{\Wcal}$ will be either zero or a positive even number, and we can refer to an already proven fact.
\end{proof}

Using the following notation $C^a$, $C^a_\ast$ let us denote the sets of active parity-checks for $H$, $H_\ast$ relative to vectors $x$, $x_\ast$:
$$
C^a = c(\supp x), \quad C^a_\ast = c_\ast(\supp x_\ast)
$$

\begin{lemma}\label{lem2}
For any $s\in C^a_\ast$ there is a non-empty set of indices ${\{i_1^s, \ldots, i_{k(s)}^s\}}$, $i_l^s\in C^a$, $l = 1, \ldots, k(s)$
such that
$$
\rho(s) = \rho(i_1^s)\oplus \ldots \oplus \rho(i^s_{k(s)}),
$$
and for different $s', s''$ the index sets do not intersect.
\end{lemma}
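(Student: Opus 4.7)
The plan is to exploit the fact that the fibers $\pi^{-1}(s) \subset \{0,\ldots,z-1\}$ for $s \in \{0,\ldots,z_*-1\}$ partition the rows of $H$, so the candidate groups will simply be $\{i_1^s,\ldots,i_{k(s)}^s\} := \pi^{-1}(s) \cap C^a$. Disjointness for distinct $s', s''$ is then automatic from disjointness of the fibers, so the two substantive things to verify are (i) nonemptiness of each group for $s \in C_\ast^a$, and (ii) the identity
\[
\rho(s) = \bigoplus_{i \in \pi^{-1}(s)} \rho(i),
\]
from which the lemma follows after dropping the inactive summands (for $i \notin C^a$ one has $H_{ij}x_j = 0$ for every $j$, hence $\rho(i) = 0$).

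For nonemptiness, given $s \in C_\ast^a$, pick $j_\ast \in \supp x_\ast$ with $c_\ast(j_\ast) = s$. From $\supp x_\ast \subset \pi(\supp x)$ there is $k \in \supp x$ with $\pi(k) = j_\ast$. Then $i := c(k) \in C^a$, and by the commutation relation~(\ref{eq1}) established in the proof of Theorem~\ref{th1},
\[
\pi(i) = \pi(c(k)) = c_\ast(\pi(k)) = c_\ast(j_\ast) = s,
\]
so $i \in \pi^{-1}(s) \cap C^a$.

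The main work is the identity $\rho(s) = \bigoplus_{i \in \pi^{-1}(s)} \rho(i)$, which I would obtain by showing the row-sum identity
\[
(H_\ast)_{s,\pi(k)} \;=\; \sum_{i \in \pi^{-1}(s)} H_{ik} \pmod 2
\qquad \text{for every } k \in \{0,\ldots,nz-1\}.
\]
Because each column of $H$ within one circulant block contains exactly one~$1$, the right-hand side is either $0$ or $1$, and it equals $1$ precisely when the unique $i$ with $H_{ik}=1$ lies in $\pi^{-1}(s)$, i.e.\ when $\pi(c(k)) = s$. By~(\ref{eq1}) this is equivalent to $c_\ast(\pi(k)) = s$, i.e.\ $(H_\ast)_{s,\pi(k)} = 1$. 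Expanding $(x_\ast)_{j'} = \bigoplus_{k \in \pi^{-1}(j')} x_k$ and swapping sums then gives
\[
(H_\ast x_\ast)_s \;=\; \bigoplus_{k} (H_\ast)_{s,\pi(k)}\, x_k \;=\; \bigoplus_k \Bigl(\bigoplus_{i \in \pi^{-1}(s)} H_{ik}\Bigr) x_k \;=\; \bigoplus_{i \in \pi^{-1}(s)} (Hx)_i,
\]
which is exactly $\rho(s) = \bigoplus_{i \in \pi^{-1}(s)} \rho(i)$.

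The main obstacle is the row-sum identity above: one has to connect the combinatorial definition of $\P_{z \to z_\ast}$ (modding exponents of circulants by $z_\ast$) with the entrywise action induced on rows, and the cleanest route is through the already-proved commutation~(\ref{eq1}) rather than the explicit exponent bookkeeping. Once that identity is in hand, restricting the XOR on the right to $C^a$ is immediate because inactive rows contribute~$0$, and disjointness of the resulting index sets is inherited from the disjointness of the fibers of $\pi$.
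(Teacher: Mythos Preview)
Your argument is correct, and the index set you produce coincides with the paper's: unwinding the paper's definition of $R(s)$ via~(\ref{eq1}) shows that its $\{i_l^s\}$ is exactly $\pi^{-1}(s)\cap C^a$. The route, however, is genuinely different. The paper proceeds combinatorially: it fixes $s$, forms $R(s)=\{j\in\supp x:\,c_\ast(\pi(j))=s\}$, exhibits two partitions of $R(s)$ (by the fibers $W_j$ and by the level sets $c^{-1}(i)$), and then invokes Lemma~\ref{lem1} to equate the parity of $|R(s)|$ with $\rho(s)$ on one side and with $\bigoplus_l\rho(i_l^s)$ on the other. You instead prove the matrix identity $(H_\ast)_{s,\pi(k)}=\bigoplus_{i\in\pi^{-1}(s)}H_{ik}$ directly from~(\ref{eq1}) and the circulant column structure, then obtain $(H_\ast x_\ast)_s=\bigoplus_{i\in\pi^{-1}(s)}(Hx)_i$ by a sum swap; this bypasses Lemma~\ref{lem1} entirely and is the algebraic statement that the check-side projection intertwines $H$ and $H_\ast$. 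Your approach is shorter and more transparent; the paper's partition-of-$R(s)$ viewpoint, on the other hand, makes the set $R(s)$ explicit, which is what feeds into the Remark following the lemma (the case where~(\ref{eq3}) has no solution projecting into $\supp x_\ast$). One small point: your row-sum identity is asserted for all $k$, but the justification via $c(k)$ only applies to $k\in V^0$; for $k\notin V^0$ both sides vanish because the corresponding circulant block is zero in $H$ and in $H_\ast$, so the identity still holds.
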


\begin{proof}
Let's fix the $s\in C^a_\ast$ and consider the equation
\begin{equation}\label{eq3}
s = c_\ast(\pi(j)), \quad j\in \supp x.
\end{equation}
Set of solutions to the ~(\ref{eq3}) denote by  $R = R(s)$. From the equality we see ~(\ref{eq1}) follows it follows that the following sets are either
fully contained in $R(s)$, or do not intersect with it:
\begin{align}
& \supp x \cap W_j,   && j\in V^0_\ast \label{eq4}\\
& \supp x \cap c^{-1}(i),  &&  i = 0, 1 , \ldots, z - 1 \label{eq5}
\end{align}
contained in  $R(s)$ non-empty representatives ~(\ref{eq4}), (\ref{eq5}) form two partitions of this set.
Let's define 
$$
\Wcal(s) := \{W_j: \, \supp x \cap W_j \subset R(s), \, j\in V_\ast^0\}.
$$
As a set $\{i_l^s\}_{l=1}^{k(s)}$ let's take different indices $i$, ${0 \leqslant i < z-1}$,
 for which the sets ~(\ref{eq5}) are non-empty and contained in $R(s)$. Number of solutions to the equation
 $c_\ast(j_\ast) = s$, $j_\ast \in \supp x_\ast$ equal to the number of such $W\in \Wcal(s)$, that $W \sim \supp x_\ast$.
Hence and from the lemma ~(\ref{lem1}) we get equalities:
\begin{multline*}
\rho(s) = \sigma\left( |\{W\in \Wcal(s):\, W\sim \supp x_\ast\}|\right) = \sigma(| R(s) |) = \\
= \sigma\left( \left | \bigcup_{l = 1}^{k(s)} \{ \supp x \cap c^{-1}(i_l^s)\}\right| \right ) =  \\
=  \sum_{l = 1}^{k(s)} \sigma( | \{ \supp x \cap c^{-1}(i_l^s)\} |) =  \sum_{l = 1}^{k(s)} \rho(i_l^s), 
\end{multline*}
where the summation is performed in $\F$.

If $s' \in C^a_\ast$, $s'\neq s$, that set of indexes $\{i_l^{s'}\}_{l=1}^{k(s')}$ cannot intersect with set $\{i_l^s\}_{l=1}^{k(s)}$ due to equality ~(\ref{eq1}).
\end{proof}
\begin{remark}
In certain scenarios, it is possible that there are no solutions \(j\) to the equation \(y_j = 1\), where \(j\) is projected into \(\text{supp } x_\ast\) according to the definition given in Equation (\ref{eq3}). By repeating the reasoning presented in the lemma, it becomes evident that in such cases, the number of active checks unsatisfied, denoted as \(\{i_l^s\}_{l=1}^{k(s)}\), will be even.
\end{remark}

The remaining part of the theorem can be derived from Lemma \ref{lem2}. For any unsatisfied check \(s \in C^a_\ast\), where \(\rho(s) = 1\), the number of unsatisfied checks among \(\{i_l^s\}_{l=1}^{k(s)}\subset C^a\) will either be equal to 1 or exceed 1 by an even number. Additionally, considering the remark, the checks from \(C^a\) that are not related to \(C^a_\ast\) will result in an even number.
\end{proof}

{\flushleft \textbf{Numerical Example 1.}}
Consider a QC parity-check matrix, with circulant $z=128$ and exponential matrix 
\begin{gather}
 E(H) = [E_1 | E_2]  , \, \text{where} \label{NumEx1.1} \\
\substack{E_1 =  \left[
\begin{array}{rrrrrrrrrr}
21&65	& 126& 39	& 84	& 89	&16		&89		&94		&28	  \\
12	& 29	& 27	& 105	& 0	      & 0		&0		&2		&108	&78	  \\
47	& 26	& 97	& 40	& 66	& 0		&81		&1		&33		&97	 \\
113	& 21	& 64	& 26	& 53	& 103	&85		&81		&102	&48
\end{array} \right]  \label{NumEx1.2} } \\
\substack{E_2 = \left[
\begin{array}{rrrrrrrrrr}
31		&12		&101	&9		&122	&99		&90		&65		&41		&-1\\
126	      &98		&27		&1		&41		&118	&49		&17		&12		&11\\
89		&37		&52		&42		&2		&120	&38		&-1		&86		&7\\
113	      &98		&71		&70		&121	&66		&95		&40		&-1		&23
\end{array} \right] } \label{NumEx1.3}
\end{gather} 
For $H$ a lifting operation, inverse to projection, has been performed. For this purpose the binary matrix was used
$B=\{b_{ij}\}$, $i=0,\ldots,3$ , $j = 0,\ldots,19$, whose elements are drawn from a pseudo-random binary distribution.
The resulting $\tilde{H}\in\F^{1024\times5120}$, $\tilde{z} = 2z = 256$ defined by its exponential matrix
$\tilde{E} = E(\tilde{H})$, 
$$
\{\tilde{E}\}_{ij} = \left \{ 
\begin{array}{ll}
\{E\}_{ij} + b_{ij}z, & \{E\}_{ij} \geqslant 0, \\
-1, & \{E\}_{ij} = -1, \\
\end{array}\right .
$$
where $E = E(H)$. It's obvious that $H = \P_{256\rightarrow 128}(\tilde{H})$.

It has been determined using an approximate method that there are 874903 \((w, v)\)-Trapping sets in the space \(\F^{5120}\) under the constraints \(w\leqslant 30\) and \(v\leqslant w\). Each of these words has undergone projection using the operation \(P_{256\rightarrow 128}\) (refer to Definition \ref{defP}). The resulting words have been classified into their respective classes \((w', v')\). The distribution of differences \((w-w', v-v')\) is presented in Table \ref{tabl1}.
Note that two class changes 
$$
 (w, v)\rightarrow (w, v), \quad (w, v)\rightarrow (w, v - 2)
$$
make up for this matrix and its lifting 99.26\% cases.

\begin{table}[t]
\caption{Distribution of differences $(w-w', v-v')$} 
\centering 
\begin{tabular}{c | c | c }
$w - w'$ & $v - v'$ & frequency, \% \\
\hline 
   0 &  0 & 93.071689 \\
   0 &  2 &  6.191543 \\
   0 &  4 &  0.191678 \\
   0 &  6 &  0.002743 \\
   2 &  0 &  0.360154 \\
   2 &  2 &  0.152474 \\
   2 &  4 &  0.016458 \\
   2 &  6 &  0.000571 \\
   4 &  0 &  0.006400 \\
   4 &  2 &  0.003543 \\
   4 &  4 &  0.001257 \\
   6 &  0 &  0.000685 \\
   6 &  2 &  0.000457 \\
   6 &  4 &  0.000228 \\
   8 &  4 &  0.000114 \\
\end{tabular}
\label{tabl1}
\end{table}

\section{QC Codes TS Enumerating  }\label{algorithm}

We will assume that we are given a quasi-cyclic parity-check matrix $H\in \F^{mz\times nz}$, 
$m, n,  z>0$. Automorphism group (~\cite{MacWilliams}[Ch.~8,~\S~5]) $Aut(C(H))$ linear code $C(H)$ contains a subgroup $\mathcal{G}_{n,z}$ quasi-cyclic shifts. $\mathcal{G}_{n,z}$~--- cyclic group of order $z$ with generator
$\pi_0$, which acts on indexes $j$, $0 \leqslant j < nz-1$, $j = qz + r$, $0 \leqslant r < z$ in the following way:
$$
\pi_0(j) = qz + (r + 1) \mod z.
$$
$k> 0$ steps of the abstract decoder will be written in the form of a nonlinear operator $D_k: \R^{nz}\rightarrow \R^{nz}$.

If $x\in \F^{nz}$~--- some $(w, v)$-Trapping sets, $w > 0$, and the generator $\pi_0\in \mathcal{G}_{n,z}$ commutes with $D_k$, $k = 1, 2, \ldots$ then it's easy to see the error limit ~\cite{ColeHall}[Step~2] will be the same for the entire orbit  $\{\pi x: \,
\pi \in \mathcal{G}_{n,z}\}$. Moreover, automorphisms $\mathcal{G}_{n,z}$ commute with projectors $P$~(def.~\ref{defP}) Trapping sets pseudo-codewords, more precisely, the following statement holds.
\begin{statement}\label{state1}
For any divisor $z_\ast$ numbers $z$ and any $s=0, 1, \ldots, {z-1}$ the following diagram is commutative:
$$
  \begin{tikzcd}[row sep=large, column sep = large]
    \F^{nz} \arrow{r}{\pi_0^s} \arrow{d}{P_{z\rightarrow z_\ast}} & \F^{nz} \arrow{d}{P_{z\rightarrow z_\ast}} \\
    \F^{nz_\ast} \arrow{r}{\pi_0^{s\mod z_\ast}} & \F^{nz_\ast}
  \end{tikzcd}
$$
\end{statement}
\begin{proof} Follows from the definitions $P_{z\rightarrow z_\ast}$ and $\pi_0\in \mathcal{G}_{n,z}$.
\end{proof}

From previous discussions and statements  ~(\ref{state1}) it is clear that it is sufficient to leave a set of pseudo-codewords that are not pairwise equivalent to each other.

Let's choose some approximate algorithm \texttt{solve}, enumerating for the quasi-cyclic check matrix $H\in \F^{mz\times nz}$
set $X$ pseudo-code words that are not equivalent in pairs with respect to each other $\mathcal{G}_{n, z}$. Of course, there are many such algorithms, optimized in different ways, for example to:
\begin{itemize}
\item Strategy \MakeUppercase {\romannumeral 1}. get maximum number $x\in \F^{nz}$ with minimum distance $d_\epsilon^2$ up to the error limit;

\item Strategy \MakeUppercase {\romannumeral 2}. get maximum number $x\in \F^{nz}$ $x\sim (w, v)$ under restrictions $w\leqslant w_{max}$, $v\leqslant v(w)$,
where in the simplest case $v(w) = w$.
\end{itemize}

\begin{algorithm}
\label{algorithm111}
\caption{Algorithm for QC Codes TS enumeration}
\begin{algorithmic}[1]
            \Require $E(H)$, $N>0$ - number of lifted matrix $E_j$. 
            \State ~\label{alg2} Construct lifted matrices $E_j$ of the exponential matrix $E(H)$ using $B_j$ (see numerical example 1) and the corresponding parity-check matrices $H_j$, $j=0, \ldots, N-1$.
            \State  ~\label{alg3} Calculate $X_{2z,j}: = \mathtt{solve}(H_j)$.
             \State ~\label{alg4} Calculate $X_{z,j}: = P_{2z\rightarrow z} X_{2z, j}$.
             \State ~\label{alg5} Find the maximum subset of a set $\cup_{j=0}^{N-1}X_{z,j}$, consisting of pairwise nonequivalent pseudo-codewords.
\end{algorithmic}
\end{algorithm}







The Algorithm 1 computational complexity is solely determined by Step ~\ref{alg3}. The validity of the method, as per Theorem ~(\ref{th1}), is specifically established for words with class conversion $(w, v)\rightarrow (w', v')$, where $w' = w$. Notably, such cases constitute the predominant scenarios in the problems under consideration.

{\flushleft \textbf{Numeric Example 2:}} Consider the QC parity-check matrix given in ~(\ref{NumEx1.1})-(\ref{NumEx1.3}) and employ 32 matrices ($N=32$). Through 20 hours of computations utilizing four parallel threads on an Intel Xeon E5-2696 v4 processor, a total of 28,623,960 distinct $(w, v)$-Trapping Sets were founded, Strategy \MakeUppercase {\romannumeral 1}. The distribution of these words, categorized by the class $(w, v)$, is presented in tables ~\ref{tabl_ab_1} to ~\ref{tabl_ab_3}. The rows in the tables correspond to the number of variable nodes $w$, while the columns represent the count of unsatisfied checks $v$.

\begin{table}
\caption{Distribution of Trapping sets $(w, v)$} 
\centering 
\scriptsize
\tabcolsep=0.2cm
\begin{tabular}{l|l|l|l|l|l|l|l|l|l|l}
&\textbf{1}&\textbf{2}&\textbf{3}&\textbf{4}&\textbf{5}&\textbf{6}&\textbf{7}&\textbf{8}&\textbf{9}&\textbf{10}\\
\hline
\textbf{1}&0 &  &  &  &  &  &  &  &  & \\
\textbf{2}&0 & 0 &  &  &  &  &  &  &  & \\
\textbf{3}&0 & 0 & 0 &  &  &  &  &  &  & \\
\textbf{4}&0 & 0 & 0 & 1 &  &  &  &  &  & \\
\textbf{5}&0 & 0 & 0 & 2 & 33 &  &  &  &  & \\
\textbf{6}&0 & 0 & 1 & 4 & 38 & 803 &  &  &  & \\
\textbf{7}&0 & 0 & 0 & 1 & 82 & 1079 & 11109 &  &  & \\
\textbf{8}&0 & 0 & 0 & 10 & 125 & 1886 & 13722 & 82004 &  & \\
\textbf{9}&0 & 0 & 1 & 11 & 184 & 2212 & 23952 & 74786 & 257967 & \\
\textbf{10}&0 & 0 & 0 & 20 & 210 & 3122 & 20028 & 150662 & 221759 & 537954\\
\textbf{11}&0 & 0 & 0 & 14 & 249 & 2548 & 23038 & 89105 & 523548 & 434372\\
\textbf{12}&0 & 0 & 1 & 26 & 249 & 2288 & 14748 & 87080 & 253567 & 1141220\\
\textbf{13}&0 & 0 & 0 & 10 & 168 & 1638 & 10379 & 49714 & 199393 & 485167\\
\textbf{14}&0 & 0 & 0 & 14 & 153 & 1217 & 7100 & 32254 & 113767 & 314533\\
\textbf{15}&0 & 0 & 0 & 11 & 105 & 781 & 4771 & 20972 & 70765 & 183229\\
\textbf{16}&0 & 0 & 2 & 9 & 97 & 578 & 3331 & 14007 & 46158 & 116059\\
\textbf{17}&0 & 0 & 0 & 5 & 51 & 414 & 2116 & 9305 & 30284 & 75122\\
\textbf{18}&0 & 0 & 0 & 3 & 54 & 258 & 1451 & 6401 & 20480 & 49946\\
\textbf{19}&0 & 0 & 0 & 0 & 28 & 168 & 1013 & 4129 & 13555 & 33001\\
\textbf{20}&0 & 0 & 1 & 1 & 21 & 116 & 658 & 2759 & 9017 & 22248\\
\textbf{21}&0 & 0 & 0 & 1 & 6 & 73 & 394 & 1808 & 6010 & 14602\\
\textbf{22}&0 & 0 & 0 & 1 & 8 & 50 & 247 & 1184 & 3860 & 9805\\
\textbf{23}&0 & 0 & 0 & 0 & 3 & 39 & 176 & 746 & 2511 & 6446\\
\textbf{24}&0 & 0 & 0 & 0 & 0 & 24 & 91 & 499 & 1665 & 4356\\
\textbf{25}&0 & 0 & 0 & 0 & 1 & 12 & 81 & 294 & 1023 & 2759\\
\textbf{26}&0 & 0 & 0 & 0 & 0 & 7 & 51 & 196 & 672 & 1893\\
\textbf{27}&0 & 0 & 0 & 0 & 1 & 6 & 26 & 144 & 416 & 1226\\
\textbf{28}&0 & 0 & 0 & 0 & 0 & 4 & 15 & 73 & 269 & 798\\
\textbf{29}&0 & 0 & 0 & 0 & 0 & 2 & 10 & 55 & 179 & 488\\
\textbf{30}&0 & 0 & 0 & 0 & 0 & 1 & 5 & 36 & 113 & 327\\
\end{tabular}
\label{tabl_ab_1}
\end{table}

\begin{table}
\caption{Distribution of Trapping sets $(w, v)$} 
\centering 
\scriptsize
\tabcolsep=0.08cm
\begin{tabular}{l|l|l|l|l|l|l|l|l|l}
&\textbf{11}&\textbf{12}&\textbf{13}&\textbf{14}&\textbf{15}&\textbf{16}&\textbf{17}&\textbf{18}&\textbf{19}\\
\hline
\textbf{11}&679776 &  &  &  &  &  &  &  & \\
\textbf{12}&525023 & 461020 &  &  &  &  &  &  & \\
\textbf{13}&1554985 & 389114 & 258537 &  &  &  &  &  & \\
\textbf{14}&657114 & 1513059 & 256509 & 148743 &  &  &  &  & \\
\textbf{15}&388532 & 708806 & 1321139 & 151631 & 57480 &  &  &  & \\
\textbf{16}&229091 & 418952 & 672534 & 1008359 & 72773 & 31479 &  &  & \\
\textbf{17}&147227 & 250979 & 403669 & 568414 & 710717 & 38932 & 10705 &  & \\
\textbf{18}&96483 & 163431 & 244586 & 355260 & 445490 & 486945 & 16008 & 5903 & \\
\textbf{19}&64579 & 108837 & 161037 & 219709 & 296733 & 331568 & 310638 & 8197 & 2002\\
\textbf{20}&43578 & 72524 & 108737 & 146508 & 189905 & 231101 & 234722 & 201297 & 3294\\
\textbf{21}&29199 & 49859 & 73927 & 99409 & 127728 & 153153 & 176151 & 162620 & 123395\\
\textbf{22}&19921 & 33911 & 50844 & 68599 & 88005 & 104483 & 121419 & 127482 & 108790\\
\textbf{23}&13300 & 23202 & 35095 & 48044 & 60763 & 73618 & 84312 & 92364 & 91845\\
\textbf{24}&9024 & 15971 & 24386 & 33526 & 42439 & 51402 & 59559 & 65539 & 69455\\
\textbf{25}&5995 & 10959 & 17133 & 23651 & 30270 & 36385 & 42387 & 47275 & 51012\\
\textbf{26}&4229 & 7595 & 11775 & 16658 & 21354 & 25734 & 30061 & 34101 & 36835\\
\textbf{27}&2736 & 5223 & 8275 & 11831 & 15478 & 18698 & 21929 & 24431 & 27141\\
\textbf{28}&1832 & 3441 & 5762 & 8268 & 11245 & 13586 & 15902 & 18157 & 19773\\
\textbf{29}&1222 & 2310 & 3990 & 5878 & 7974 & 9774 & 11568 & 13128 & 14457\\
\textbf{30}&813 & 1537 & 2753 & 4178 & 5723 & 7115 & 8728 & 9757 & 10807\\
\end{tabular}
\label{tabl_ab_2}
\end{table}

\begin{table}
\caption{Distribution of Trapping sets $(w, v)$} 
\centering 
\scriptsize
\tabcolsep=0.12cm
\begin{tabular}{l|l|l|l|l|l|l|l|l|l|l|l}
&\textbf{20}&\textbf{21}&\textbf{22}&\textbf{23}&\textbf{24}&\textbf{25}&\textbf{26}&\textbf{27}&\textbf{28}&\textbf{29}&\textbf{30}\\
\hline
\textbf{20}& 1199 &  &  &  &  &  &  &  &  &  & \\
\textbf{21}& 1696 & 516 &  &  &  &  &  &  &  &  & \\
\textbf{22}& 77348 & 748 & 311 &  &  &  &  &  &  &  & \\
\textbf{23}& 71948 & 47293 & 429 & 194 &  &  &  &  &  &  & \\
\textbf{24}& 63799 & 46955 & 28875 & 252 & 112 &  &  &  &  &  & \\
\textbf{25}& 49967 & 44223 & 30251 & 17646 & 168 & 70 &  &  &  &  & \\
\textbf{26}& 38506 & 36280 & 29742 & 19657 & 10920 & 106 & 52 &  &  &  & \\
\textbf{27}& 28602 & 28679 & 25804 & 19810 & 12566 & 6552 & 72 & 35 &  &  & \\
\textbf{28}& 21540 & 21716 & 21008 & 17841 & 13092 & 7953 & 4066 & 41 & 32 &  & \\
\textbf{29}& 15185 & 16371 & 16191 & 14971 & 12051 & 8647 & 4949 & 2507 & 0 & 0 & \\
\textbf{30}& 11685 & 12146 & 12662 & 12214 & 10673 & 8224 & 5699 & 3093 & 1485 & 0 & 0\\
\end{tabular}
\label{tabl_ab_3}
\end{table}

\section{Quasi-Cyclic Codes Trapping Set Weighing }

This section describes techniques to expedite the estimation of the error-floor probability by leveraging the inherent characteristics of quasi-cyclic codes. The strategy involves utilizing the tabular Importance Sampling method, specifically in step 3, ~\cite{ColeHall}. In this step, the estimation of the weight of trapping sets pseudo-codewords is briefly proposed in (\cite{Richardson}, \S 4.1).

As in previous sections, $H\in \F^{mz\times nz}$, $m, n, z>0$~--- parity-check matrix defining a linear code $C(H)$, 
$D_k:\R^{nz}\rightarrow \R^{nz}$~--- decoder $k > 0$ iterations, $\mathcal{G}_{n,z}$~--- cyclic group of code automorphisms $C(H)$,
acting from the left on sets $\F^{nz}$, $\R^{nz}$ with generatrix $\pi_0$, $N:=nz$.

Probability of error in decoding a codeword of size $N$ 
with a normally distributed $N$-dimensional random noise $\xi \sim \mathcal{N}(\theta, \Sigma)$, $\Sigma = \mathrm{diag}(\sigma^2, \ldots, \sigma^2)$
equal ~\cite{ColeHall}[eq. ~5]:
\begin{equation}\label{eqPf}
P_f = \int\limits_{\R^N} I_e(y)w(y)f^\ast(y)\, dy.
\end{equation}
where  $I_e(\cdot)$~--- characteristic function of non-codewords from $\R^N$, $f^\ast(\cdot)$~--- biased density distribution,
\begin{align*}
f^\ast(y) &= 1/|\mathcal{V}|\, \sum_{x\in \mathcal{V}} f(y, x), \quad \mathcal{V}\subset \R^N, \, y\in \R^N\\
f(y, x) &= (2\pi\sigma^2)^{-N/2}\exp (-1/2\sigma^2\, \|y-x\|^2_{\R^N}), \quad x, y\in \R^N\\
w(y) &= f(y, c)/f^\ast(y), \quad y\in \R^N
\end{align*}
$f(\cdot, x)$~--- density distribution $N$-dimensional normal random variable, $f\sim \mathcal{N}(x, \Sigma)$.
$c = (1, \ldots, 1)\in \R^N$~--- encoded zero codeword. Defined $\|\cdot\|$ the Euclidean norm.

As noted in the previous section, $d_\epsilon^2(y) = d_\epsilon^2(\pi y)$, 
$\pi \in \mathcal{G}_{n,z}$, $y\in \F^N$ given that $D_k\pi = \pi D_k$, $k=1, 2, \ldots$.
It follows that it is reasonable to choose a set $\mathcal{V}$ in the form of a set of disjoint complete
$\mathcal{G}_{n,z}$-orbits of elements. We will denote these elements $\mathcal{V}_0=\{y_0^b, \ldots, y_{p-1}^b\}$,
$y_i^b\nsim y_j^b$, $i \neq j$, $y_j^b\in \R^N$. Then
\begin{gather*}
\mathcal{V} =  \mathcal{V}_0 \sqcup \pi_0 \mathcal{V}_0 \sqcup \ldots \sqcup \pi_0^{z-1} \mathcal{V}_0, \; |\mathcal{V}| = pz, \\
f^\ast(\cdot) = 1/z\, \sum_{j=0}^{z-1} f^\ast_j(\cdot), \;
\\
\text{where} \; f^\ast_j(\cdot) = 1/p\, \sum_{x\in \pi_0^j \mathcal{V}_0} f(\cdot, x), \; j = 0, \ldots, z-1.
\end{gather*}

The encoding elements \(0, 1 \in \F\) are represented by numbers \(1, -1 \in \R\) respectively. Consequently, for Trapping Sets \(x\in \F^N\), the codeword can be expressed as \(c - \mu x\in \R^N\), where \(\mu > 0\) is a method parameter. To simplify calculations, we utilize a natural embedding \(\F^N\subset \R^N\), translating elements \(\{0, 1\}\in \F\) into numbers \(\{0, 1\}\in \R\). The "basis" vectors from \(\mathcal{V}_0\) can be represented in the form \(y_j^b = c - \mu x_j^b\), where \(x_j^b\in \R^N\), \(j = 0, \ldots, p-1\), \(\{x_j^b\}_i\in \{0, 1\}\), and \(i = 0, \ldots, N-1\).

The following statement simplifies the expression $P_f$ for such a choice of set  $\mathcal{V}$.
\begin{statement}\label{statntegral}
Let's pretend that $\pi D_k = D_k \pi$, for all $\pi \in \mathcal{G}_{n,z}$, $k=1,2,\ldots$. Then
\begin{equation}\label{eqPf0}
P_f = \int\limits_{\R^N}I_e(y)w(y)f_0^\ast(y)\,dy
\end{equation}
\end{statement}
\begin{proof}
$$
P_f = 1/z\, \sum_{j=0}^{z-1}\,\int\limits_{\R^N} I_e(y)w(y)f_j^\ast(y)\, dy =: 1/z\, \sum_{j=0}^{z-1}A_j.
$$
Commutability $\pi$ and $D_k$ gives equality $I_e(\pi y) = I_e(y)$, $y\in \R^N$. It is also easy to verify $f(\pi y, x) = f(y, \pi^{-1}x), \; f^\ast(\pi y) = f^\ast(y),\;\; \forall x, y\in \R^N, \; \pi \in \mathcal{G}_{n,z}.$

Let's make a variable change $y=\pi_0^j u$ in the term $A_j$:
$$
A_j = \int\limits_{\R^N}I_e(\pi_0^j u) \frac{f(\pi_0^j u, c)}{f^\ast(\pi_0^j u)} f^\ast_j(\pi_0^j u)\, du = 
$$
$$
\int\limits_{\R^N}I_e(u) \frac{f(u, \pi_0^{-j}c)}{f^\ast(u)} f^\ast_0(u)\, du = A_0, 
$$
where used $\pi c = c$. This implies the statement.
\end{proof}

Approximation ~(\ref{eqPf0}) can be obtained using the Monte Carlo method:
\begin{equation}\label{eqPf}
P_f = 1/L\, \sum _{l=0}^{L - 1}I_e(y_l)w(y_l),
\end{equation}
where $y_l$ taken from a density distribution $f^\ast_0(\cdot)$. Calculation of value  ~(\ref{eqPf}) can be simplified further.
$$
y_l = y^b_{l\mod p} + \xi_l = c - \mu x_{l\mod p}^b + \xi_l,\,\, l = 0, 1, \ldots, \,\, \xi_l \sim \mathcal{N}(\theta, \Sigma)
$$
Let's transform the expression $w(y_l)$:
$$
w(y_l) = f(y_l, c)  /f^\ast(y_l) = zp \, \frac{R(y_l)}{S(y_l)}, \;\; l = 1, 2, \ldots.
$$
\begin{multline}\label{eqR}
R(y_l) = \exp\left( -1/2\sigma^2\, \|y_l - c\|^2_{\R^N} + \delta\right) = \\
= \exp\left( -1/2\sigma^2\, \|\mu x^b_{l\mod p} - \xi_l\|^2_{\R^N} + \delta\right)
\end{multline}
\begin{multline}\label{eqS}
S(y_l) = \sum_{j=0}^{z - 1} \sum_{x\in \pi_0^j \mathcal{V}_0} \exp \left( -1/2\sigma^2 \, \|y_l - x \| ^2_{\R^N} + \delta\right) = \\
=\sum_{j=0}^{z - 1} \sum_{k=0}^{p-1}\exp\left(-1/2\sigma^2 \| y_l - \pi_0^j y_k^b\|^2_{\R^N} + \delta\right)  = \\
=\sum_{j=0}^{z - 1} \sum_{k=0}^{p-1} \exp\left( -1/2\sigma^2 \|\mu (\pi_0^j x_k^b - x^b_{l\mod p}) + \xi_l\|^2_{\R^N} + \delta\right),
\end{multline}
$l = 0, 1, \ldots$, and $\delta$~--- a real number used in the calculation to normalize the argument of the exponent.
In the last expression ~(\ref{eqS}) let's swap the summation signs and the sum of the exponents according to $j$ from $0$ to $z-1$ let's denote 
$S_{lk}$. Then $S(y_l) = \sum_{k=0}^{p-1}S_{lk}$.

For any pair $l, k$, $k = 0, \ldots, p-1$, $l \geqslant 0$ let's select two subsets of indices $j$ namely, we get:
\begin{align}
J^1_{lk} &= \{j: \; 0 \leqslant j < z - 1, \, \pi_0^j (\supp x_k^b)\cap \supp x_{l\mod p}^b \neq \varnothing\}\label{eqJ1}\\
J^2_{lk} &= \{j: \; 0 \leqslant j < z - 1, \, \pi_0^j (\supp x_k^b)\cap \supp x_{l\mod p}^b = \varnothing\}\label{eqJ2}
\end{align}
Let's break the amount down $S_{lk}$ by two:
$$
S_{lk} = \sum_{j\in J^1_{lk}}\exp(\ldots) + \sum_{j\in J^2_{lk}}\exp(\ldots) =: S^1_{lk} + S^2_{lk}.
$$
Calculation of quantities $S^1_{lk}$ and $S^2_{lk}$ can be simplified. 
\begin{multline}\label{eqS1}
S^1_{lk}  = \exp\left(-1/2\sigma^2 \, \|\xi_l\|^2_{\R^N} + \delta\right) \times \\
 \sum _{j\in J^1_{lk}}\exp\bigg[ -\mu^2/2\sigma^2\, |(\pi_0^j(\supp x_k^b))\Delta (\supp x^b_{l\mod p})|  \\
+ \mu/\sigma^2     \bigg(\sum_{i\in \supp x^b_{l\mod p}\setminus \pi_0^j(\supp x^b_k)} \{\xi_l\}_i  \quad  \\ -
\sum_{i\in \pi_0^j(\supp x^b_k)\setminus \supp x^b_{l\mod p}} \{\xi_l\}_i \bigg)\bigg ],
\end{multline}
where $\Delta$~--- symmetric difference of sets, and

\begin{multline}\label{eqS2}
S^2_{lk} = \exp \Big( -1/2\sigma^2\, \|\xi_l\|_{\R^N} -  \mu^2/2\sigma^2\, \big(|\supp x^b_k|  \\  + |\supp x^b_{l \mod p}|\big)   +\mu/\sigma^2 \times 
 \sum_{i\in \supp x^b_{l\mod p}} \{\xi_l\}_i + \delta\Big) \\  \times \sum _{j\in J^2_{lk}}\exp \left( - 
\mu/\sigma^2 \sum_{i\in \pi_0^j (\supp _k^b)}\{\xi_l\}_i\right)
\end{multline}

Properties of equations ~(\ref{eqS1}), (\ref{eqS2}) is that the calculations are divided into index-independent $j$ and the parts that depend on it. In order to simplify direct calculations, it is enough for us to store two tables of sets:
\begin{align}
T^1_{klj}  &= \pi_0^j(\supp x_k^b)\setminus \supp x_l^b \label{eqT1}\\
T^2_{klj} &= \supp x^b_l\setminus \pi_0^j (\supp x_k^b), \label{eqT2}
\end{align}
$k, l = 0, \ldots, p-1, \, j = 0, \ldots, z-1$.  Families of sets ~(\ref{eqT1}), (\ref{eqT2}) connected by simple relations:
$\pi_0^{z-j}T^1_{k,l,j} = T^2_{l,k,z-j}$, $k,l = 0, \ldots, p - 1$, $j = 0, \ldots, z-1$. In practice, most sets on the right sides ~(\ref{eqT1}), 
(\ref{eqT2}) will not intersect.

\section{CONCLUSION}

This paper introduces an approach to enumerate and assess Trapping sets in quasi-cyclic codes with a circulant size that is not a prime number. To streamline the importance sampling step, specifically in estimating the weight of Trapping sets pseudo-codewords, a tabular quasi-cyclic method is proposed for the list of Trapping sets.

\end{document}